%
%
%
%
%
%
%
\documentclass[%
reprint,
superscriptaddress,
 amsmath,amssymb,
 aps,
pra,
]{revtex4-2}

\usepackage{graphicx,color}
\usepackage{dcolumn}
\usepackage{bm}


\usepackage{graphicx}
\usepackage{textcomp}
\usepackage{amsthm}
\usepackage{mathtools}
\usepackage{algorithm, algpseudocode}
\usepackage{enumitem}
\usepackage{physics}
\usepackage[export]{adjustbox}
\usepackage{appendix}
\usepackage[normalem]{ulem}
\mathtoolsset{showonlyrefs=true}

\algrenewcommand\algorithmicindent{1em}

\theoremstyle{remark}

\newtheorem{theorem}{Theorem}[]

\newtheorem{proposition}{Proposition}[]
\newtheorem{lemma}{Lemma}[]

\usepackage[dvipsnames]{xcolor}

\newcommand{\EX}{\mathbf{E}}
\newcommand{\PR}{\mathbf{P}}

\newcommand{\I}{\mathbb{I}}
\newcommand{\cov}{\text{cov}}
\newcommand{\vari}{\text{var}}

\def\mc{\mathcal}
\def\mbf{\mathbf}


\usepackage{xr}
\makeatletter
\newcommand*{\addFileDependency}[1]{
\typeout{(supplementary)}
%
%
\@addtofilelist{supplementary}
%
\IfFileExists{supplementary}{}{\typeout{No file supplementary.}}
}\makeatother

\newcommand*{\myexternaldocument}[1]{%
\externaldocument{supplementary}%
\addFileDependency{supplementary.tex}%
\addFileDependency{supplementary.aux}%
}
\myexternaldocument{supplementary}

\usepackage{xstring}
\newcommand{\secref}[1]{\IfBeginWith{#1}{sec:}{S-}{}\ref{#1}}

\begin{document}

\preprint{APS/123-QED}

\title{\textbf{Effects of Correlated Errors on Quantum Memory} 
}%

\author{Smita Bagewadi}
\email{Contact author: ee19d002@smail.iitm.ac.in}
\affiliation{%
 Department of Electrical Engineering, Indian Institute of Technology Madras,  India
}
\author{Avhishek Chatterjee}%
 \email{Contact author: avhishek@ee.iitm.ac.in}
\affiliation{%
 Department of Electrical Engineering, Indian Institute of Technology Madras, India
}%



\date{\today}

\begin{abstract}
Recent results on constant overhead LDPC code-based fault-tolerance against i.i.d. errors naturally lead to the question of fault-tolerance against errors with long-range correlations. Ideally, any correlation can be captured by a joint (system and bath) Hamiltonian. However, an arbitrary joint Hamiltonian is often intractable, and hence, the joint Hamiltonian model with pairwise terms was introduced and developed in a seminal series of papers \cite{TerhalB2005,AliferisGP2005,AharonovKP2006}. However, the analysis of the new constant overhead codes in that error model appears to be quite challenging.

In this paper, to model correlated errors in quantum memory, we introduce a  correlation model which is a generalization of the well-known hidden random fields. This proposed model, which includes stationary and ergodic (non-Markov) error distributions, is shown to capture correlations not captured by the joint Hamiltonian model with pairwise terms. On the other hand, building on non-i.i.d. measure concentration, we show that for a broad class of non-Markov and (possibly) non-stationary error distributions, quantum Tanner codes ensure an exponential retention time (in the number of physical qubits), when the error rate is below a threshold. An implication of these results is that the rate of decay of the correlation with distance does not necessarily differentiate between good and bad correlation.


\end{abstract}

\maketitle


Quantum LDPC codes can provably offer constant overhead fault-tolerance for quantum memories in the presence of i.i.d. and local stochastic errors \cite{gottesman2014fault, fawzi2020constant, BravyiCG2024high}. Recent developments in quantum codes gradually led  to constant overhead codes with linear distance \cite{PanteleevK2021,LeverrierZ2022,LeverrierZ2023,GuPT2023}. This has made a compelling case for the use of constant overhead codes in fault-tolerance. However, the fault-tolerance offered by such codes against correlated errors, which is a pressing practical issue \cite{WilenAKS2021}, is yet to be understood. 

Stabilizer codes are the most widely used quantum codes for error correction. In the stabilizer framework, errors are effectively quantized to Pauli errors, and their locations are inferred from error syndromes. Thus, in the stabilizer framework, a model for correlated errors is essentially a model of correlation between error locations. Local stochastic error model \cite{aharonov1997fault,AharonovB2008} is a well-known  model for correlation between error locations  and is the widely used error model for studying fault-tolerance using constant overhead codes \cite{gottesman2014fault, fawzi2020constant}. However, the local stochastic noise model is restricted by exponential correlation decay \cite{TerhalB2005}, which is not practical.


In \cite{TerhalB2005,AliferisGP2005,AharonovKP2006}, a much broader model of correlation was developed using a pairwise joint (system and bath) Hamiltonian approach. In this model, when the correlation between two qubit locations dies as inverse of their distance (power dimension), concatenated codes were shown to achieve fault-tolerance using poly-logarithmic overhead. Although this model is broader than the local stochastic model, it is not tractable for constant overhead codes.

In this paper, we propose a model for correlated errors based on hidden random fields, which has been a topic of interest in probability theory and statistical physics \cite{KunschGK1995,CamiaJN2018, Chatterjee2018}. Our proposed model, which includes all stationary and ergodic errors within its expanse, is proven to capture correlations beyond the pairwise Hamiltonian model. As an intermediate result, we derive a quadratic upper bound on the retention time of any error-corrected quantum memory under a class of correlated errors. On the other hand, building on non-i.i.d. concentration results \cite{KontorovichR2008}, we show that for a broad class of correlated errors, quantum Tanner codes can keep a memory corrected for an exponential time, if the error rate is below a threshold.

The proposed correlated error model is a broad generalization of the local stochastic error model  \cite{aharonov1997fault,AharonovB2008}. More specifically, it is a (possibly) non-stationary and (possibly) non-Markov generalization of Eqn. 2.3 and 2.5 in \cite{AharonovB2008}. Due to the presence of a rich literature on random fields \cite{KunschGK1995,CamiaJN2018,Chatterjee2018} and non-i.i.d. measure concentration \cite{BoucheronLM2013,KontorovichR2008}, this model is likely to be analytically tractable for studying future fault-tolerance techniques. 

Quantum Tanner codes are a rich class of constant overhead codes with very good minimum-distance properties. In \cite{PanteleevK2021}, it was shown that quantum LDPC codes, a special class of Tanner codes, have an  almost linear minimum distance. Later an exact linear minimum distance result for a class of quantum Tanner codes  was derived in \cite{LeverrierZ2022}. We refer to \cite{LeverrierZ2022} for a detailed discussion of the quantum Tanner codes. In this paper, we only use the linear minimum-distance property of the quantum Tanner codes from \cite{LeverrierZ2022} and do not invoke any other properties of them.

\subsection{{\bf Hidden Random Field Model for Errors}}
\label{sec:model}
A $k$ qubit state is encoded into an $n$ qubit state ($n>k$) using a quantum code and is placed in a memory at time $t=0$. This memory is periodically corrected for at discrete time instants $t \in \{1, 2, \ldots\}$. Between $t$~th and $(t+1)$~th correction epochs, the environment causes each qubit to decohere and become erroneous. We model error in a qubit by the generic operator $\mc{E}$, which is representative of the standard Pauli operators in the stabilizer framework. 

Between any two correction epochs $t$ and $t+1$, an error occurs at a qubit location with probability $\epsilon$, i.e., with probability $\epsilon$ the qubit is operated by $\mc{E}$ and with probability $1-\epsilon$ it is operated by the identity operator $I$. However, the events that correspond to the occurrences of errors are not independent of each other. 

We represent the occurrence of an error at qubit $i$ between epochs $t$ and $t+1$ by a $\{0,1\}$-valued random variable $Y_{i,t}$, where $Y_{i,t}=1$ represents error.
Thus, if the state of qubit $i$ at the end of correction epoch $t$ is $\rho_{i,t}$, its state right before the correction epoch $t+1$ is given by $\mc{N}_{i,t}(\rho_{i,t})$. Here, $\mc{N}_{i,t}=\EX_{Y_{i,t}\sim \text{Bernoulli}(\epsilon)}\left[(1-Y_{i,t}) I + Y_{i,t} \mc{E}\right]$, which is exactly the same as Eqn. 2.5 in \cite{AharonovB2008}. However, since the errors across the qubits are not independent, the joint evolution of $n$ qubit state $\rho^{(n)}_{t}$ is given by $\mc{N}^{(n)}_t(\rho^{(n)}_{t})$, where 
\begin{equation}
\mc{N}^{(n)}_t=\EX_{\{Y_{i,t}: 1 \le i \le n\}} \otimes_{i=1}^n \left[(1-Y_{i,t}) I + Y_{i,t} \mc{E}\right]. \nonumber
\end{equation}
In general, an arbitrary joint Bernoulli distribution of $\mbf{Y}_t:=\{Y_{i,t}: 1 \le i \le n\}$ can capture any  correlated error pattern. Due to analytical tractability, the distribution of $\mbf{Y}_t$ was restricted in \cite{AharonovB2008}. In this work, we consider a hidden random field model for $\mbf{Y}_t$: the distribution of $\mbf{Y}_t$, $\mbf{p}(\mbf{Y}_t)=\sum_{\mbf{X}_t} \mbf{p}(\mbf{X}_t) \mbf{q}(\mbf{Y_t}|\mbf{X}_t)$, where $\mbf{X}_t:=\{X_{i,t}: 1 \le i \le n \}$ are random variables taking values in a countable space $\mc{X}$ (possibly infinite). $\mbf{p}$ and $\mbf{q}$ are, respectively, an arbitrary joint distribution of $\mbf{X}$ and an arbitrary conditional distribution of $\mbf{Y}$ given $\mbf{X}$.

This error model is inspired by the fact that a non-Markov subclass of this model, called the hidden Markov random field model, is known to be dense among the class of all stationary and ergodic joint Bernoulli distributions \cite{KunschGK1995}. Note that errors from any time-invariant joint Hamiltonian would be limited to stationary and ergodic errors.  The proposed model has a natural physical interpretation: variables $\mbf{X}_t$ model physical parameters within the memory such as local temperature, local magnetic field, and local value of external radiation at time $t$. Its spatial distribution is modeled by $\mbf{p}(\mbf{X}_t)$. These physical parameters, in turn, influence the manifestation of errors $\mbf{Y}_t$ through the conditional distribution $\mbf{q}(\mbf{Y}_t|\mbf{X}_t)$, which depends on the choice of technology. 

In this paper, we restrict ourselves to the following class of error distributions, which we call the generalized hidden Markov random field model: $\mbf{X}$ is a Markov random field (MRF) and $\mbf{q}(\mbf{Y}_t|\mbf{X}_t)=\prod_{i=1}^n q_i(Y_{i,t}|\mbf{X}_t)$. In particular, we restrict to $1$D arrangment of qubits and hence, to $1$D MRF.
An MRF is a generalization of the well-known Ising model for spin glass systems. For a $1$D MRF $\mbf{X}$: $\mbf{p}(\mbf{X})=p(X_1)\prod_{i=1}^{n-1} p_i(X_{i+1}|X_i)$. 
$\mbf{Y}_t$ is called a hidden MRF if $\mbf{X}_t$ is an MRF and $\mbf{q}(\mbf{Y}|\mbf{X})=\prod_{i=1}^n q_i(Y_i|X_i)$. The above class of distribution is a generalization of it, and hence the name. Later, we show that this generalized hidden MRF model is strictly broader than the pairwise joint Hamiltonian model.


It is clear that $\{\mc{N}_{i,t}\}$ can be obtained by taking partial trace of $\mc{N}^{(n)}_t$. However, in general, $\mc{N}^{(n)}_t$ is not equal to the tensor product of the single-qubit operators, $\{\mc{N}_{i,t}\}$. We assume that the process $\{Y_{i,t}: 1 \le i \le n\}$ is stationary with respect to time $t$, and possibly non-Markov. 

For simplicity, we would drop $t$ from the notations  $Y_{i,t}$, $\mbf{Y}_t$, $X_{i,t}$, $\mbf{X}_t$, $\mc{N}_{i,t}$ and $\mc{N}_t$, whenever the context is clear. The error rate for qubit $i$ is $\EX[Y_{i}]$ and the error rate of the memory is $\frac{1}{n} \sum_{i=1}^n \EX[Y_{i}]$.

The state $\mc{N}^{(n)}_t(\rho^{(n)}_{t})$ is corrected using a fault-tolerant quantum circuit. After correction,  the state in the memory at epoch $t+1$ becomes $\rho^{(n)}_{t+1}$, which is again corrupted by correlated errors in the memory and is again corrected at epoch $t+2$ by the correcting quantum circuit. 

We say that a memory with a particular fault-tolerant scheme has a retention time or lifetime $T$, if for some decoder $\mc{D}$ and for all $t\le T$, $\mc{D}\left(\mc{N}^{(n)}_t(\rho^{(n)}_{t})\right)$ is within a prescribed distance from the original $k$-qubit state. 
We say that a memory has an exponential retention or lifetime if $T \ge \exp(a n)$ for some $a>0$. On the other hand, we say that a memory has at most a linear or a quadratic retention time if $T \le c n$  or $T \le c n^2$, respectively,  for some $c>0$. 

Clearly, exponential retention time is the best achievable retention time since even fully error-corrected classical memory cannot store classical bits against simple erasures any longer than that. On the other hand, a memory with linear or quadratic lifetime is not suitable for storing quantum states for future processing when they are generated by costly quantum computations or simulations. Though our model applies to qubit arrangement in any dimension, in this paper, we consider only $1$D arrangements.

\subsection{{\bf Expanse of the Model}}
\label{sec:converse}

The hidden MRF model, which is a special case of the generalized hidden $1$D MRF model proposed in this paper, is known to be dense among the class of all stationary and ergodic error distributions \cite{KunschGK1995}. Under a time-invariant (system and bath) joint Hamiltonian, errors in the stabilizer framework would be stationary and ergodic. Thus, the proposed error model includes all error distributions in the stabilizer framework that can potentially be generated by a time-invariant (system and bath) joint Hamiltonian. Thus, the proposed model for long-range correlated error is at least as broad as the pairwise joint Hamiltonian model. Next, we show an example of a correlated error distribution from the proposed model that is strictly outside the scope of the pairwise joint Hamiltonian model.


Consider the following special case of our model, where the errors $\mathbf{Y}_t$ are i.i.d. in time and have the following distribution at any time $t$. The hidden variables $\mbf{X}$ are i.i.d (in space) Bernoulli and $\mbf{Y}$ are conditionally independent given $\mbf{X}$: $\PR(\mbf{Y})=\sum_{\mbf{x}} \mbf{p}(\mbf{X}) \mbf{q}(\mbf{Y}|\mbf{X})$, and $\mbf{q}(\mbf{Y}|\mbf{X}) = \prod_{i=1}^n q_i(Y_i|\mbf{X})$, where $q_i(Y_i=1|\mbf{X})=1$ if $\sum_i X_i \ge \sum_i \EX[X_i] + C_n \sqrt{n}$ and $q_i(Y_i=1|\mbf{X})=0$ else. Clearly, the errors $\mbf{Y}$ have a stationary distribution.

Under the above model let the channel be $\mc{N}^{(n)}$ and let $\mc{N}_{ij}$ be the partial trace of $\mc{N}$ over all physical qubits but $i$ and $j$. Ideally, if the errors were independent, $\mc{N}_{ij}=\mc{N}_i \otimes \mc{N}_j$. Thus, the diamond distance $||\mc{N}_{ij}-\mc{N}_i \otimes \mc{N}_j||_{\diamond}$ is a good measure of correlation between the noise operators at two different qubit locations. By analyzing the above error model we observe that for this $1$D error model, the correlation (between $Y_i$ and $Y_j$) and the diamond distance decay faster than $1/\mbox{distance}^2$. On the other hand, we also observe from the analysis that the probability of $\{Y_{i,t}=1, 1 \le i \le n\}$ any $t$, i.e., all qubits are in error, is lower bounded by $\frac{c'}{n^2}$, for some $c'>0$. This implies that after $\frac{n^2}{c'}$ correction epochs, with probability $1-\frac{1}{e}$, all qubits in the memory would become erroneous. Thus, no code would be able to retain the memory for more than $\frac{n^2}{c'}$ time.  The following theorem states this result formally, whose detailed proof is presented in Appendix~\ref{sec:prop:badCorr}.

\begin{proposition}
\label{prop:badCorr}
For $X_i$ i.i.d. Bernoulli $\epsilon$, there exists a choice $C_n$ such that the covariance between  two errors $Y_i$ and $Y_j$ and  $||\mc{N}_{ij}-\mc{N}_i \otimes \mc{N}_j||_{\diamond}$ are less than $\frac{c'}{n^2}$ for some constant $c'>0$, and $|\EX[Y_i]-\epsilon| \to 0$ as $n\to \infty$. However, for any error rate $\epsilon>0$ and any fault-tolerant memory with an error-free circuit for correction, the retention time is bounded above by $\alpha n^{2}$ for some constant $\alpha$ independent of $n$, for any tolerable distance (from the initial state) $\le 1-\frac{1}{e}$. $\blacksquare$
\end{proposition}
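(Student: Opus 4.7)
Plan: The structural key is that this construction is \emph{all-or-nothing} across qubits. On the event $A := \{\sum_{i=1}^n X_i \ge n\epsilon + C_n\sqrt{n}\}$, every $Y_i$ equals $1$, and on $A^c$ every $Y_i$ equals $0$. Consequently, $\EX[Y_i] = \EX[Y_i Y_j] = \PR(A)$ for all $i\ne j$, so $\cov(Y_i,Y_j) = \PR(A)(1-\PR(A))$. My plan is to choose $C_n \asymp \sqrt{\log n}$ so that $\PR(A) = \Theta(1/n^2)$; this single tuning simultaneously delivers the $O(1/n^2)$ upper bounds on covariance, diamond distance, and marginal error, \emph{and} the matching $\Omega(1/n^2)$ lower bound that forces the retention-time estimate.

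For the three pairwise/marginal claims, I would take $C_n = 2\sqrt{\epsilon(1-\epsilon)\log n}$ and use sharp moderate-deviation (Chern\-off/Cram\'er-type) estimates for the $\mathrm{Binomial}(n,\epsilon)$ tail at the threshold $n\epsilon + C_n\sqrt{n}$ to obtain $c'/n^2 \le \PR(A) \le c''/n^2$, adjusting $C_n$ by a constant factor to absorb any polylogarithmic prefactor. The covariance bound $\cov(Y_i,Y_j) \le \PR(A) = O(1/n^2)$ and the marginal $\EX[Y_i] = \PR(A) \to 0$ follow immediately. For the diamond distance, since $\mathbf{Y}$ is all-or-nothing the two-qubit marginal channel is $\mc{N}_{ij} = (1-\PR(A))\,I^{\otimes 2} + \PR(A)\,\mc{E}^{\otimes 2}$ while the product factorization is $\mc{N}_i\otimes\mc{N}_j = \bigl((1-\PR(A))I + \PR(A)\mc{E}\bigr)^{\otimes 2}$; a two-line expansion gives $\mc{N}_{ij} - \mc{N}_i\otimes\mc{N}_j = \PR(A)(1-\PR(A))\,(I-\mc{E})\otimes(I-\mc{E})$ (in superoperator notation), whose diamond norm is bounded by $4\,\PR(A)(1-\PR(A)) = O(1/n^2)$.

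For the retention-time upper bound, I would combine the matching lower bound $\PR(A)\ge c'/n^2$ with the temporal i.i.d.\ assumption on $\mathbf{Y}_t$: the catastrophic indicators $\mathbf{1}_{A_t}$ for $t=1,\ldots,T$ are i.i.d.\ Bernoulli$(\PR(A))$, so $\PR\bigl(\bigcup_{t=1}^T A_t\bigr) \ge 1 - (1-\PR(A))^T \ge 1 - \exp(-Tc'/n^2)$, and taking $T = n^2/c'$ pushes this above $1-1/e$. On any $A_t$, the state just before the $(t{+}1)$-th correction equals $\mc{E}^{\otimes n}\rho^{(n)}_t\mc{E}^{\otimes n}$, i.e.\ every physical qubit is simultaneously hit. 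The main obstacle will be the final step: turning this weight-$n$ corruption into a universal lower bound on the trace distance between the decoded state and the original that is valid for \emph{every} code and \emph{every} decoder. I would handle this by observing that no code of positive rate has distance $n$, so a weight-$n$ Pauli error lies outside the correction radius of any nontrivial decoder; concretely, for a generic choice of $\mc{E}$ the operator $\mc{E}^{\otimes n}$ acts as a nontrivial logical on the code space, forcing the post-correction state to be nearly orthogonal to the initial encoded state and thus at trace distance exceeding the $1-1/e$ threshold, yielding the claimed $\alpha n^2$ bound with $\alpha = 1/c'$.
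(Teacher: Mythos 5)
Your proposal is built on an \emph{all-or-nothing} reading of the construction: on $A=\{\sum_i X_i \ge n\epsilon + C_n\sqrt{n}\}$ all $Y_i=1$, and on $A^c$ all $Y_i=0$. That is not the model the proposition is about, and the discrepancy is fatal to part of the claim: the proposition asserts that the memory has error rate $\epsilon$ and that $|\EX[Y_i]-\epsilon|\to 0$, whereas under your reading you correctly compute $\EX[Y_i]=\PR(A)\to 0$, which contradicts the very statement you are proving (and empties the content ``for any error rate $\epsilon>0$'', since your memory then has error rate $O(1/n^2)$, not $\epsilon$). The model used in the paper's proof sets $Y_i=X_i$ for all $i$ on $A^c$ and $Y_i=1$ for all $i$ on $A$; only with this version do the marginals stay near $\epsilon$ while the catastrophic all-error event retains probability $\Theta(\mathrm{poly}(1/n))$. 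With the correct model your two cleanest steps also collapse: the identity $\cov(Y_i,Y_j)=\PR(A)(1-\PR(A))$ is false (on $A^c$ the $Y$'s are the i.i.d.\ $X$'s, but conditioning on $A^c$ induces weak dependence among them), and the factorization $\mc{N}_{ij}=(1-\PR(A))I^{\otimes 2}+\PR(A)\mc{E}^{\otimes 2}$ no longer describes the two-qubit marginal channel.

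The paper handles the covariance by the law of total covariance, $\cov(Y_i,Y_j)=\EX[\cov(Y_i,Y_j\mid \I_A)]+\cov(\EX[Y_i\mid\I_A],\EX[Y_j\mid\I_A])$: the conditional covariance given $\I_A=0$ is bounded by comparing, in total variation, the conditional joint law of $(X_i,X_j)$ given $A^c$ with the product of its conditional marginals (a Hoeffding bound on the remaining coordinates gives $O(e^{-cC_n^2})$), and the second term is bounded by $\vari(\I_A)\le \PR(A)$; the diamond-norm claim then follows because $\|\mc{N}_{ij}-\mc{N}_i\otimes\mc{N}_j\|_\diamond$ is controlled by $|\cov(Y_i,Y_j)|\cdot\|\mc{E}\otimes\mc{E}\|_\diamond$. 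You would need to rebuild your covariance and diamond-distance bounds along these (or equivalent) lines. The remainder of your plan is essentially the paper's: $C_n\asymp\sqrt{\log n}$ tuned so the lower tail estimate gives $\PR(A)\ge c'/n^2$, temporal independence giving $\PR(\cup_{t\le T}A_t)\ge 1-e^{-T\PR(A)}$ and hence $T\le n^2/c'$ at distance threshold $1-1/e$; and your final step, that a weight-$n$ error defeats every code and decoder, is asserted at essentially the same level of informality in the paper's own proof, so it is not a point of divergence, though flagging it was fair.
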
 

The decay of correlation  faster than $1/\text{distance}^2$ is equivalent to a bound of $1/\text{distance}^2$ on the norms of pairwise terms in the joint Hamiltonian model. When the error rate is sufficiently low and pairwise Hamiltonian terms decay faster than $1/\text{distance}^2$, errors are provably correctable for arbitrary time \cite{AharonovKP2006}. Thus, by {\em reductio ad absurdum}, Proposition~\ref{prop:badCorr}  implies that the particular error distribution considered therein, which is a special case of the proposed model, is not captured by the joint Hamiltonian model with pairwise terms.  

\subsection{{\bf Tanner Codes and Long-range Correlations}}
\label{sec:tanner}

We observed above that the hidden MRF model is strictly broader than the best existing error model for long-range correlations. Our next result concerns   error distributions from the hidden MRF model that satisfy a broad class of regularity conditions. We prove that under such correlated errors an appropriately chosen quantum Tanner code can keep a quantum memory corrected for an exponential (in $n$) time, if the error rate is below a threshold. Intuitively, as discussed later, the regularity conditions are related to the rate of decay of correlation between two locations with their geometric distance and the smoothness of the conditional probabilities $\{q_i\}$. 


\begin{proposition}
\label{prop:hidden1DMRF}
A quantum memory error corrected using a well-chosen (quantum Tanner) code with constant overhead has a lifetime $\exp(a n)$ for some $a>0$ and $n$ sufficiently large, if the error rate is below a threshold and $\mbf{Y}$ is from a generalized hidden $1$D MRF which satisfies the following conditions: 

For $\theta_i:=\frac{1}{2} \sup_{x,x'} \sum_{a \in \mc{X}}|p_i(a|x) - p_i(a|x')|$, $1 + \max_{1\le i \le n} \sum_{k=i}^n \prod_{j=i}^k \theta_j$ is a constant independent of $n$ and  $\psi(\mbf{X}):=\sum_{i=1}^n q_i(Y_i=1|\mbf{X})=\sum_{i=1}^n \EX[Y_i|\mathbf{X}]$ is $c$-Lipschitz \cite{KontorovichR2008} with respect to the Hamming distance on $\mc{X}$. $\blacksquare$

\end{proposition}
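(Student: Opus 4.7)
The strategy is to show that at every correction epoch the physical error weight $W_t := \sum_{i=1}^n Y_{i,t}$ stays below the decoding radius of the Tanner code with probability $1 - \exp(-\Omega(n))$, then to take a union bound over $T = \exp(an)$ epochs with $a$ smaller than the per-epoch exponent. The decoding radius exists by virtue of the linear minimum distance of quantum Tanner codes \cite{LeverrierZ2022}: there is an absolute constant $\tau > 0$ such that if $W_t \le \tau n$, the (error-free) correction circuit returns the encoded state exactly. Hence the memory survives $T$ epochs on the event $\bigcap_{t\le T} \{W_t \le \tau n\}$, and it suffices to estimate the probability of this event.

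\textbf{Two-level concentration at a single epoch.} Conditioned on the hidden field $\mbf{X}_t$, the indicators $\{Y_{i,t}\}$ are independent Bernoullis with means $q_i(Y_i=1\mid \mbf{X}_t)$, so Hoeffding yields
\begin{equation}
\PR(W_t - \psi(\mbf{X}_t) > s \mid \mbf{X}_t) \le \exp\!\left(-\tfrac{2 s^2}{n}\right). \nonumber
\end{equation}
Since $\mbf{X}_t$ is a 1D MRF, it is an (inhomogeneous) Markov chain whose one-step Dobrushin contraction coefficients are exactly the $\theta_j$ in the hypothesis. The quantity $M := 1 + \max_i \sum_{k=i}^n \prod_{j=i}^k \theta_j$, assumed $O(1)$, matches the mixing-matrix norm in the Kontorovich--Ramanan McDiarmid inequality \cite{KontorovichR2008}; applied to the $c$-Lipschitz functional $\psi$ it gives
\begin{equation}
\PR(|\psi(\mbf{X}_t) - \EX \psi(\mbf{X}_t)| > r) \le 2 \exp\!\left(-\frac{r^2}{2 n c^2 M^2}\right). \nonumber
\end{equation}
Because $\EX \psi(\mbf{X}_t) = \sum_i \EX[Y_{i,t}] = n\epsilon$, choosing $s = r = \tfrac{1}{2}(\tau - \epsilon) n$ and combining the two tail bounds yields $\PR(W_t > \tau n) \le \exp(-\gamma n)$ for a constant $\gamma = \gamma(\tau,\epsilon,c,M) > 0$, provided the error rate $\epsilon$ lies below the threshold $\tau$.

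\textbf{Union bound and expected obstacles.} Stationarity of $\mbf{Y}_t$ in $t$ lets the same per-epoch bound be applied at every round; a union bound over $t = 1, \ldots, \exp(an)$ gives
\begin{equation}
\PR\!\left(\exists\, t \le e^{an} : W_t > \tau n\right) \le e^{-(\gamma - a) n}, \nonumber
\end{equation}
which is $o(1)$ for any $a \in (0, \gamma)$. On the complementary event every correction epoch decodes successfully, so the stored state remains within the prescribed distance of the initial state for time $\exp(an)$, establishing the proposition. The main obstacle I expect is matching the mixing quantity $M$ in the hypothesis to the operator norm of the mixing matrix in \cite{KontorovichR2008}: this requires showing that, for a 1D Markov field, the mixing coefficients $\eta_{i,k}$ are dominated by products $\prod_{j=i}^{k}\theta_j$ of successive Dobrushin coefficients, so that the row sums of the mixing matrix are captured by the expression in the proposition. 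A secondary but more routine point is to translate the linear minimum distance of \cite{LeverrierZ2022} into a constant decoding-radius fraction $\tau$ against stochastic Pauli errors, so that the code-level reduction to the event $\{W_t \le \tau n\}$ is rigorously justified.
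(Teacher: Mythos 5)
Your proposal is correct and follows essentially the same route as the paper's proof: the same two-level decomposition (conditional Hoeffding given the hidden field $\mbf{X}_t$, plus the Kontorovich--Ramanan inequality applied to the $c$-Lipschitz functional $\psi$ with the $O(1)$ mixing quantity $M$), followed by the linear-minimum-distance property of quantum Tanner codes and a union bound over exponentially many epochs. The two technical points you flag as obstacles are exactly the ones the paper resolves by citing \cite[Thm~1.2]{KontorovichR2008} and \cite{LeverrierZ2022}.
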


The condition on $1 + \max_{1\le i \le n} \sum_{k=i}^n \prod_{j=i}^k \theta_j$  is related to $1/\mbox{distance}$  decay of correlation in $1$D. This becomes clear by considering the special case: $\mbf{Y}=\mbf{X}$ and $\mc{X}=\{0,1\}$. Here, $\prod_{j=i}^k \theta_j$ is a bound on $|\EX[Y_k|Y_i=1]-\EX[Y_k|Y_i=0]|$, which is the standard measure of absolute value of correlation in an MRF \cite{CamiaJN2018,Chatterjee2018}. Thus, the condition, $1 + \max_{1\le i \le n} \sum_{k=i}^n \prod_{j=i}^k \theta_j=O(1)$, can be interpreted as the sum of the absolute values of correlations between pairs of locations are $O(1)$. This, in turn, means that in Prop.~\ref{prop:hidden1DMRF}, the absolute value of correlations between any two error variables, $Y_i$ and $Y_k$,  decays faster than $\frac{1}{|k-i|}$. Interestingly, this condition echoes the condition in \cite{AharonovKP2006} for fault-tolerance against long-range correlations.  

Given a discrete space $\mc{X}$, a function $\phi$ from $\mc{X}^n$ to real numbers is called $c$-Lipschitz with respect to Hamming distance if for any two sequences $\mbf{x}$ and $\mbf{x}'$ in $\mc{X}^n$,  $|\phi(\mbf{x})-\phi(\mbf{x}')|$ is upper bounded by $c$ times the number of coordinates where $\mbf{x}$ and $\mbf{x}'$ differ. Thus, the $c$-Lipschitz property is similar to smoothness of a function in an Euclidean space, i.e., a function with bounded derivatives. Thus, intuitively, the $c$-Lipschitz condition on $\sum_{i=1}^n q_i(Y_i=1|\mbf{X})=\sum_{i=1}^n \EX[Y_i|\mathbf{X}]$ which is a function from $\mc{X}^n$ to real numbers, imposes smoothness with respect to $\mbf{x}$ on the transition probabilities $\{q_i(1|\mbf{x})\}$.

Here, we  point out that the distribution for which Proposition \ref{prop:badCorr} in Sec.\ref{sec:converse} was proved is, in fact, a special case of the generalized hidden MRF model. The main difference between the distributions in Sec.\ref{sec:converse} and Prop.~\ref{prop:hidden1DMRF} is that in the case of the former, the sum of the conditional expectations, $\sum_{i=1}^n \EX[Y_i|\mathbf{X}]$, does not satisfy the $c$-Lipschitz property. Thus, the $c$-Lipschitz property, in other words, smoothness of $\sum_{i=1}^n \EX[Y_i|\mathbf{X}]$, seems to separate good correlations from bad correlations. 

To derive Proposition~\ref{prop:hidden1DMRF}, we first show that if the error rate is $\epsilon$, in the above model,  the number of errors introduced between the epochs $t$ and $t+1$  is less than $n(\epsilon+\beta)$ with a probability not less than $1-\exp(-2 n \beta)$. Toward that, we derive a concentration result for non-Markov $\mbf{Y}$ by building on results from Markov concentration \cite{KontorovichR2008}. Remaining part of the proof invokes the recently derived linear distance properties of quantum Tanner codes \cite{LeverrierZ2022, LeverrierZ2023}. 

Let at every time $t$, the memory is decoded and encoded using a Tanner code with linear minimum distance $b n$ , for $b>0$ and $\beta,\epsilon\le \frac{b}{2}$. With probability at least $1-\exp(-nb)$, the number of errors introduced at $t+1$ would be at most $n b$. Thus, the Tanner code would correct the memory at time $t+1$ with probability at least $1-\exp(-nb)$, if it was correct till $t$. Thus, by the union bound, the time until memory stays accurate with probability at least $1-\frac{1}{n}$ is  lower bounded by $\exp(bn-\ln n)$. A detailed proof of Proposition~\ref{prop:hidden1DMRF} is presented in Appendix~\ref{sec:prop:hidden1DMRF}.


As discussed above, the decay of correlation faster than $1/{\mbox{distance}}$, has generally been considered to be a requirement for mitigating correlated errors. This is echoed in the first condition of Prop.~\ref{prop:hidden1DMRF}. Interestingly, because of the better analytical tractability of the error model proposed here, we can also investigate the other side of that condition.  By a minor adaptation in the proof of Prop.~\ref{prop:hidden1DMRF}, it follows that if the correlation between errors at two different locations decays as $1/\text{distance}^{1-\frac{\eta}{2}}$, Tanner codes can achieve a retention time that is exponential in $n^{1-\eta}$. To be more specific, if the sum of the absolute pairwise correlations, $\max_{1\le i \le n} \sum_{k=i}^n \prod_{j=i}^k \theta_j$ (Prop.~\ref{prop:hidden1DMRF}), is $g(n)$, then the retention time is lower bounded by the exponential of a constant times $\frac{n}{(g(n))^2}$. Thus, a correlation decay faster than $1/\text{distance}$ for $1$D systems is not necessary for a super-polynomial retention time of the form $\exp(n^{1-\eta})$ for some $\eta \in (0,1)$.


\subsection*{\bf Conclusion}
In summary, we proposed a hidden random field model for long-range correlated errors, which is a broad generalization of the local stochastic noise model in \cite{AharonovB2008}.  The proposed model provably captures correlations outside the scope of the joint Hamiltonian model with pairwise terms and,   arguably, has a strictly larger expanse. This model is also analytically tractable due to the rich literature on probability and statistical physics around it \cite{KunschGK1995,Chatterjee2018,CamiaJN2018}. Building on this literature and the minimum distance properties of Tanner codes, we proved that Tanner codes can keep a memory corrected for exponential time against a broad class of non-Markov and possibly non-stationary long-range correlated errors. 

Based on results from pairwise Hamiltonian models, it is generally believed that for $1$D qubit arrangements, a correlation decay faster than $1/\text{distance}$  is necessary and sufficient for arbitrarily long fault-tolerance. However, as discussed above, for the model in Sec.\ref{sec:tanner} whose conditional distributions satisfy the $c$-Lipschitz property, faster than $1/\text{distance}$ correlation decay was not necessary for having super-polynomial (in $n$) retention time. On the other hand, for the error distribution in Sec.\ref{sec:converse}, which has a correlation decay faster than $1/\text{distance}$, but does not satisfy $c$-Lipschitz property, no fault-tolerant scheme can have high retention time. Thus, the $c$-Lipschitz property, which is crucial for measure  concentration of non-i.i.d. distributions \cite{KontorovichR2008,BoucheronLM2013} seems to be the differentiator, not the rate of correlation decay. An implication is that the concentration of error distributions seems to be a more accurate criterion for identifying the class of correctable correlated errors in quantum memory than the rate of correlation decay.

\subsection*{{\bf Acknowledgments}}
The authors thank Arul Lakshminarayan for helpful comments on the draft. The authors thank ANRF, India (formerly SERB, India) for support through CRG/2023/005345 and the Ministry of Education, India.




\bibliography{bibfile}

\appendix

\section{Proof of Proposition~\ref{prop:badCorr}}
\label{sec:prop:badCorr}
Consider the following non-Markovian noise model that was introduced in Proposition~\ref{prop:badCorr}. The error variable at $i$ is denoted by $Y_i$. Let $X_i$ for $i=1,2,\hdots n$ be an i.i.d Bernoulli random variable with $P(X_i=1)=\epsilon$ and $B_n:=n\epsilon + \sqrt{n}C_n$. The error variable $Y_i=X_i$ for all $i=1,2,\hdots n$ if $\sum\limits_{i=1}^{n} X_i \leq B_n$; and $Y_i=1$ for all $i=1,2,\hdots n$ if $\sum\limits_{i=1}^{n} X_i > B_n$.

Since partial trace and expectation interchange,  $||\mc{N}_{iJ}-\mc{N}_i \otimes \mc{N}_j||_{\diamond}$ is covariance of $Y_i$ and $Y_j$ times $||\mc{E}\otimes \mc{E}||_{\diamond}$. Thus, an upper bound on the covariance between $Y_i$ and $Y_j$ also implies the same  upper bound on $||\mc{N}_{iJ}-\mc{N}_i \otimes \mc{N}_j||_{\diamond}$. 

Therefore, we now investigate the covariance for this noise model and the memory length it permits.

\begin{theorem}
\label{thm:correlation}
    The covariance between any two error variables $Y_i$ and $Y_j$ for the particular error distribution is less than $c_3\exp(-c_6 C_n^2)$ for some $c_3 >0$ and $c_6>0$.
\end{theorem}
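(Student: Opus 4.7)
The plan is to reduce the covariance computation to a tail-probability estimate for the i.i.d.\ Bernoulli sum. Let $S := \sum_{i=1}^n X_i$ and let $A := \{S \le B_n\}$ denote the ``typical'' event. The noise rule directly gives the decomposition $Y_i = X_i\,\mathbf{1}_A + \mathbf{1}_{A^c}$, and since $\mathbf{1}_A \mathbf{1}_{A^c} = 0$, for $i \neq j$ the product simplifies to $Y_i Y_j = X_i X_j\,\mathbf{1}_A + \mathbf{1}_{A^c}$. Substituting into $\cov(Y_i, Y_j) = \EX[Y_i Y_j] - \EX[Y_i]\,\EX[Y_j]$, one obtains the identity
\[
\cov(Y_i, Y_j) = \EX[X_i X_j \mathbf{1}_A] - \EX[X_i \mathbf{1}_A]\,\EX[X_j \mathbf{1}_A] + \PR(A^c)\bigl(1 - \EX[X_i \mathbf{1}_A] - \EX[X_j \mathbf{1}_A] - \PR(A^c)\bigr).
\]

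Next I would exploit independence of the $X_i$'s. Using $\EX[X_i X_j \mathbf{1}_A] = \epsilon^2 - \EX[X_i X_j \mathbf{1}_{A^c}]$ and $\EX[X_i \mathbf{1}_A] = \epsilon - \EX[X_i \mathbf{1}_{A^c}]$ and expanding the product, the dominant $\epsilon^2$ terms cancel exactly, leaving only terms of the form $\EX[X_i \mathbf{1}_{A^c}]$, $\EX[X_i X_j \mathbf{1}_{A^c}]$, $\PR(A^c)$, and products thereof. Since $X_i, X_j \in \{0,1\}$, each such term is bounded by $\PR(A^c)$, and collecting them gives $|\cov(Y_i, Y_j)| \le c_2\,\PR(A^c)$ for an absolute constant $c_2$ depending only on $\epsilon$. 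Finally, Hoeffding's inequality applied to the i.i.d.\ Bernoulli sum yields $\PR(A^c) = \PR(S > n\epsilon + \sqrt{n}\,C_n) \le \exp(-2 C_n^2)$, which combined with the previous bound gives $|\cov(Y_i, Y_j)| \le c_3 \exp(-c_6 C_n^2)$ with $c_6 = 2$, uniformly over $i, j$.

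No step is genuinely hard; the only care required is the bookkeeping in the algebraic expansion. Specifically, one must verify that the $\epsilon^2$ terms cancel exactly and that every surviving term can be written as an $\mathbf{1}_{A^c}$-weighted expectation or equals $\PR(A^c)$ itself, so that the final bound is truly of order $\PR(A^c)$ rather than some larger combination involving $\epsilon$. All of the probabilistic content is packaged into a single Chernoff--Hoeffding tail estimate on $\sum_i X_i$.
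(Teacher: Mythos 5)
Your proof is correct, and it takes a genuinely different and more elementary route than the paper's. The paper applies the law of total covariance with respect to $\I_A$, which forces it to control two separate pieces: the conditional covariance given $\I_A=0$ (handled by comparing the conditional joint law of $(X_i,X_j)$ to the product of conditional marginals in total variation, itself bounded via Azuma--Hoeffding tail estimates $\delta_{n-1},\delta_{n-2}$), and the covariance of the conditional means (bounded by $\vari(\I_A)\le \PR(A)$). Your pointwise decomposition $Y_i = X_i\mathbf{1}_A + \mathbf{1}_{A^c}$, together with the observation $Y_iY_j = X_iX_j\mathbf{1}_A + \mathbf{1}_{A^c}$, collapses all of this into one algebraic expansion in which the $\epsilon^2$ terms cancel by independence of $X_i$ and $X_j$, and every surviving term is an $\mathbf{1}_{A^c}$-weighted expectation of a $\{0,1\}$-valued quantity, hence bounded by $\PR(A^c)$; the only probabilistic input is a single Hoeffding bound $\PR(A^c)\le\exp(-2C_n^2)$ (note your constant is actually absolute, not $\epsilon$-dependent). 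What your approach buys is brevity, an explicit constant $c_6=2$, and no need for the total-variation lemma; what the paper's conditioning structure buys is a template that would survive if the ``catastrophic'' branch were something other than the deterministic all-ones pattern, since it never needs a closed form for $Y_i$ in terms of $X_i$ and $\I_A$. For the model as actually defined, your argument is complete and, if anything, tighter.
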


 To prove Theorem~\ref{thm:correlation}, we find an expression for the covariance between any $Y_i$ and $Y_j$ in terms of the conditional covariance given the event $A:=\{\sum\limits_{i=1}^{n} X_i > B_n\}$. Then, the law of total covariance allows us to combine the bounds on conditional covariance to derive the bound for the unconditional covariance between $Y_i$ and $Y_j$.
 
 Let $\I_A$ denote the indicator of event $A$, i.e., the random variable $\I_A:= 1$ if event $A$ occurs and $0$ otherwise. The law of total covariance is applied here as follows: 
 \begin{align}
 \label{eq:total_covariance}
     \cov(Y_i,Y_j) &= \EX[\cov(Y_i,Y_j \mid \I_A)] 
        + \cov(\EX[Y_i \mid \I_A], \EX[Y_j \mid \I_A]) 
 \end{align}

The following Lemmas (\ref{lem:cov_given_Acomplement} and \ref{lem:cov_expectation}) will investigate bounds on the various terms in eq~\eqref{eq:total_covariance}.
\begin{lemma}
\label{lem:cov_given_Acomplement}
     The conditional covariance of $Y_i$ and $Y_j$ for any $i \neq j$ given $\I_A=0$ decays exponentially in $C_n^2$, i.e., $\cov(Y_i,Y_j \mid \I_A=0) \leq c_4\exp(-c_6C_n^2)  $ for some $c_4$, $c_6>0$.
\end{lemma}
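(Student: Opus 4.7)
The key observation is that on $\{\I_A = 0\}$ the model forces $Y_i = X_i$, so the lemma reduces to bounding $\cov(X_i, X_j \mid S \le B_n)$, where $S := \sum_k X_k$. Since the $X_k$'s are unconditionally i.i.d., $\cov(X_i, X_j) = 0$, and this is the lever I would pull: running the law of total covariance backwards expresses the desired conditional covariance purely in terms of quantities controlled by $p := \PR(A) = \PR(S > n\epsilon + \sqrt{n}C_n)$, which Hoeffding's inequality bounds by $\exp(-2C_n^2)$.

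Concretely, I would first apply the law of total covariance conditioning on the binary variable $\I_A$,
\begin{equation}
0 = \cov(X_i, X_j) = \EX[\cov(X_i, X_j \mid \I_A)] + \cov(\EX[X_i \mid \I_A], \EX[X_j \mid \I_A]),
\end{equation}
and handle the ``between'' term by symmetry. Because the $X_k$'s are exchangeable and $A$ depends on them only through their sum, the two conditional expectations $\EX[X_i \mid \I_A]$ and $\EX[X_j \mid \I_A]$ agree as random variables: both take the value $\mu_n := \EX[X_i \mid A^c]$ on $A^c$ and $\mu_n' := \EX[X_i \mid A]$ on $A$. The between term therefore collapses to the variance of a two-valued random variable, namely $p(1-p)(\mu_n - \mu_n')^2$, which is at most $p$.

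It then remains to expand the ``within'' term,
\begin{equation}
\EX[\cov(X_i,X_j\mid\I_A)] = (1-p)\cov(X_i,X_j\mid A^c) + p\cov(X_i,X_j\mid A),
\end{equation}
solve for $\cov(X_i,X_j\mid A^c)$, and bound $|\cov(X_i,X_j\mid A)| \le 1/4$ via Cauchy--Schwarz together with $\vari(X_i\mid A) \le 1/4$ for Bernoulli variables. Using $(1-p) \ge 1/2$ for $C_n$ moderately large, I would conclude
\begin{equation}
|\cov(Y_i, Y_j \mid \I_A = 0)| \le c_4\, p \le c_4 \exp(-c_6 C_n^2),
\end{equation}
with, for instance, $c_4 = 3/2$ and $c_6 = 2$. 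I do not anticipate a genuine obstacle: the only step that requires a moment's thought is the symmetry collapse of the between term (where exchangeability, not just equal marginals, is used), and everything else is a single line of algebra controlled entirely by Hoeffding.
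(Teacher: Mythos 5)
Your proposal is correct, but it takes a genuinely different route from the paper. The paper bounds $\cov(X_i,X_j\mid A^c)$ directly: it writes the conditional covariance as $\EX_P[f]-\EX_Q[f]$ for the conditional joint law $P$ versus the product of conditional marginals $Q$, bounds this by $\|P-Q\|_1$, and then controls the total-variation distance by Hoeffding/Azuma bounds on the conditional tail probabilities $\PR(A^c\mid X_i=x_i,X_j=x_j)$ (the $\delta_{n-1},\delta_{n-2}$ bookkeeping). You instead exploit that $\cov(X_i,X_j)=0$ unconditionally and run the law of total covariance backwards, solving for $\cov(X_i,X_j\mid A^c)$ in terms of quantities that are all trivially of order $p=\PR(A)\le \exp(-2C_n^2)$; this is shorter, more elementary, and avoids the conditional-probability expansion entirely. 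It also mirrors the decomposition the paper itself uses one level up (its Eq.~for $\cov(Y_i,Y_j)$ and its Lemma bounding $\cov(\EX[Y_i\mid\I_A],\EX[Y_j\mid\I_A])$ by $\vari(\I_A)\le\PR(A)$), so your argument in effect shows the inner lemma can be dispatched by the same trick. Two small remarks: your exchangeability step is valid but not needed, since $\EX[X_i\mid\I_A]$ is a $[0,1]$-valued function of the Bernoulli variable $\I_A$, so Cauchy--Schwarz gives $\lvert\cov(\EX[X_i\mid\I_A],\EX[X_j\mid\I_A])\rvert\le\vari(\I_A)\le p$ directly; and your stated constants $c_4=3/2$, $c_6=2$ presuppose $p\le 1/2$, i.e.\ $C_n$ bounded away from zero, which is harmless here (the paper ultimately takes $C_n=a\sqrt{\ln n}$, and for small $C_n$ one can absorb the trivial bound $\lvert\cov\rvert\le 1/4$ into $c_4$).
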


\begin{proof}
    Given $\I_A=0$, $Y_i=X_i$ and $Y_j=X_j$. Substituting this in the definition of conditional covariance, we have $\cov(Y_i,Y_j \mid \I_A=0) = \EX[(X_i - \EX[X_i \mid \I_A=0]) (X_j-\EX[X_j \mid \I_A=0]) ]$. 
    
    Let $\EX_P[\cdot]$ denote expectation over the joint distribution $P:=p_{X_i,X_j \mid \I_A=0}(\cdot)$. Then $\cov(Y_i,Y_j \mid \I_A=0) = \EX_P[f(X_i,X_j)]$. 
    
    Let $Q$ denote the product distribution $p_{X_i \mid \I_A=0}\cdot p_{X_j \mid \I_A=0}$. We can see that $\EX_Q[f(X_i,X_j)]=0$. Therefore, we can write $\cov(Y_i,Y_j \mid \I_A=0)= \EX_P[f(X_i,X_j)] - \EX_Q[f(X_i,X_j)]$. From \cite{BoucheronLM2013}, $\EX_P[f(X_i,X_j)] - \EX_Q[f(X_i,X_j)] \leq \|P - Q\|_1$. 

    Now consider $\|P - Q\|_1$. By definition, 

    By rewriting the joint and conditional distributions in the expression differently, $\|P - Q\|_1$ becomes

    \begin{align*}
       & 
        \frac{\PR(X_i=x_i)\PR(X_j=x_j)}{\PR(A^c)} \big( 
        \sum | \PR(A^c \mid X_i=x_i,X_j=x_j)  \nonumber \\
        &  -\PR(A^c|X_i=x_i)\PR(A^c|X_j=x_j) | \big)
    \end{align*}

    Consider $\PR(A^c \mid X_i=0,X_j=0)$:
    \begin{align*}
        \PR(A^c \mid X_i=0,X_j=0) = \PR(\sum \limits_{k \neq i,j} X_k \leq B_n)
    \end{align*}
    For simplicity, we denote the probability of the sum of any $(n-1)$ $X_i's$ being greater than $x$ by $\delta_{n-1}$, i.e., probability of the event event $\sum \limits_{k \neq i} X_k > x$ by $\delta_{n-1}$. Similarly use the notation $\delta_{n-2}$ for sum of $(n-2)$ $x_i
    s$.
    By the use of Azuma-Hoeffding inequality for $\{0,1\}$-valued i.i.d random variables \cite{BoucheronLM2013}, it can be seen that
    $\delta_{n-1}$ and $\delta_{n-2}$ are upper boubded by $c_5\exp(-c_6 C_n^2)$, for some $c_5$, $c_6$ $>0$.

    Thus, 
    \begin{align*}
        &\left| \PR(A^c \mid X_i=0,X_j=0) - \PR(A^c|X_i=0)\PR(A^c|X_j=0) \right| \\
        &= |1-\delta_{n-2} - (1-\delta_{n-1})^2| \\
        &=|\delta_{n-2} - \delta_{n-1}^2 + 2\delta_{n-1}|\\
        &\leq 3 c_5\exp(-c_6C_n^2)
    \end{align*}

    Similarly, we can get a similar bound for  other $x_i$ and $x_j$ values. Note that for any non-decreasing $C_n$, $\frac{\PR(X_i=x_i)\PR(X_j=x_j)}{\PR(A^c)} \le C \epsilon^2$ for any $x_i$ and $x_j$, for some $C>0$. Combining these we get
    \begin{align*}
        \|P - Q\|_1 \leq c_4\exp(-c_6~C_n^2),
    \end{align*}
    for some $c_4>0$.
\end{proof}

\begin{lemma}
\label{lem:cov_expectation}
    $\cov(\EX[Y_i \mid \I_A], \EX[Y_j \mid \I_A]) \leq c'_5\exp(-c'_6 C_n^2)$ for some $c'_5$ and $c'_6>0$.
\end{lemma}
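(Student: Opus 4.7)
The plan is to exploit the fact that $\I_A$ is a binary random variable, so both $\EX[Y_i \mid \I_A]$ and $\EX[Y_j \mid \I_A]$ take at most two values and their covariance admits a simple closed form. Writing $u_a := \EX[Y_i \mid \I_A = a]$, $v_a := \EX[Y_j \mid \I_A = a]$ for $a \in \{0,1\}$, and $p := \PR(A)$, a one-line expansion of $\EX[UV] - \EX[U]\EX[V]$ with $U = \EX[Y_i \mid \I_A]$ and $V = \EX[Y_j \mid \I_A]$ yields the identity
\begin{equation}
\cov(\EX[Y_i \mid \I_A], \EX[Y_j \mid \I_A]) = p(1-p)(u_1 - u_0)(v_1 - v_0).
\end{equation}

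I would then bound the three factors separately. Since $Y_i, Y_j \in \{0,1\}$, the conditional expectations lie in $[0,1]$, so trivially $|u_1 - u_0|\le 1$ and $|v_1 - v_0|\le 1$. It therefore suffices to bound $p(1-p)\le p=\PR(A)$. The event $A$ is $\{\sum_{i=1}^{n} X_i > n\epsilon + \sqrt{n}\,C_n\}$, where the $X_i$ are i.i.d.\ Bernoulli$(\epsilon)$ with mean $\epsilon$; by Hoeffding's inequality for $\{0,1\}$-valued i.i.d.\ random variables,
\begin{equation}
\PR(A) \le \exp(-2 C_n^2),
\end{equation}
and the lemma follows with, e.g., $c'_5 = 1$ and $c'_6 = 2$.

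The main ``obstacle'' is essentially cosmetic: the only nontrivial step is verifying the elementary covariance identity for two functions of a binary random variable, which is a short algebraic computation. Unlike Lemma~\ref{lem:cov_given_Acomplement}, which required a Pinsker-style $L^1$ reduction between conditional joint and product distributions and needed Azuma--Hoeffding applied to subsums of $(n-1)$ and $(n-2)$ variables, here conditioning on $\I_A$ collapses each marginal to a two-point law, so the exponential decay in $C_n$ is inherited directly from the Hoeffding tail of $\sum_i X_i$. I therefore expect this proof to be substantially shorter than that of Lemma~\ref{lem:cov_given_Acomplement}.
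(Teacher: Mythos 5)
Your proof is correct and follows essentially the same route as the paper's: both exploit that $\EX[Y_i\mid \I_A]$ is a two-valued function of a Bernoulli variable, reduce the covariance to a multiple of $p(1-p)\le \PR(A)$, and finish with a Hoeffding-type tail bound on $\PR(A)$. The only (cosmetic) difference is that you use the exact identity $\cov = p(1-p)(u_1-u_0)(v_1-v_0)$ where the paper bounds each variance by $\vari(\I_A)$ and applies Cauchy--Schwarz.
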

\begin{proof}
    Let $Z_i:=\EX[Y_i \mid \I_A]$ and $Z_j:=\EX[Y_j \mid \I_A]$. $Z_i$ and $Z_j$ are functions of $\I_A$ denoted by $f_i(\I_A)$ and $f_j(\I_A)$. Since $\I_A$ is a $\text{Bernoulli}(p)$ random variable, the variance of  $f_i(\I_A)$ is $p(f_i(1))^2 + (1-p)(f_i(0))^2 - (f_i(1)p + f_i(0)(1-p))^2 = p(1-p)(f_i(1)-f_i(0))^2$. As $f_i(\I_A) \in [0,1]$, we have $\vari(Z_i) \leq \vari(I_A)$. Therefore, $\cov(Z_i,Z_j) \leq \sqrt{\vari(Z_i)\vari(Z_j)} \leq \vari(\I_A)$. Variance of $\I_A$ is $P(A)(1-P(A)) \leq P(A) \leq c'_5\exp(-c'_6 C_n^2)$
\end{proof}

\begin{proof}[Proof of Theorem~\ref{thm:correlation}]
    By the law of total covariance,
    \begin{align*}
        \cov(Y_i,Y_j) &= \EX[\cov(Y_i,Y_j \mid \I_A)] \\
        &+ \cov(\EX[Y_i \mid \I_A], \EX[Y_j \mid \I_A])         
    \end{align*}
    The first term in the above expression is $\EX[\cov(Y_i,Y_j \mid \I_A)]= \PR(A)\cov(Y_i,Y_j \mid \I_A=1) + \PR(A^c)\cov(Y_i,Y_j \mid \I_A=0)$. Given $\I_A=1$, $Y_i=1$ for all $i=1,2,\hdots n$. Therefore, $\cov(Y_i,Y_j \mid \I_A=1) = \EX[Y_i Y_j \mid \I_A=1] - \EX[Y_i \mid \I_A=1]\EX[Y_j \mid \I_A=1] = 1 -1\cdot 1 = 0$. 

    By Lemma~\ref{lem:cov_given_Acomplement}, we have $\cov(Y_i,Y_j \mid \I_A=0)\leq c_4\exp(-c_6C_n^2)$. Therefore the first term, i.e. $\EX[\cov(Y_i,Y_j \mid \I_A)] \leq \PR(A^c)\cdot c_4\exp(-c_6C_n^2)$. 

    By Lemma~\ref{lem:cov_expectation}, the second term is  $\cov(\EX[Y_i \mid \I_A], \EX[Y_j \mid \I_A]) \leq c'_5\exp(-c'_6 C_n^2)$. Thus, we get the bound in Theorem \ref{thm:correlation}.
\end{proof}

\begin{lemma}
\label{lem:length}
    Retention time of the memory is no more than $\frac{1}{\PR(A)}$ for any tolerable distance (from the initial state) $\le 1-\frac{1}{e}$.
\end{lemma}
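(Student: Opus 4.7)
The plan is to combine the i.i.d.-in-time structure of the error process with the observation that the event $A := \{\sum_{i=1}^n X_i > B_n\}$ corresponds to the simultaneous failure of every qubit. Once it is established that $A$ occurs somewhere in $[1, T]$ with probability at least $1 - 1/e$ for $T = 1/\PR(A)$, and that a single occurrence of $A$ pushes the decoded state essentially a full trace distance away from the initial encoded state, the bound $T \le 1/\PR(A)$ will follow immediately.

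First I would invoke that $\mbf{Y}_t$ is i.i.d.\ across time epochs, as specified in Sec.~\ref{sec:model} and the setup of Prop.~\ref{prop:badCorr}. Since $A$ is a function of $\mbf{X}_t$ alone, the events $\{A_t\}_{t \ge 1}$ are independent with common probability $\PR(A)$, so the probability that $A$ fails to occur in any of the first $T$ epochs is exactly $(1-\PR(A))^T$. Applying the elementary inequality $(1-x)^{1/x} \le 1/e$ for $x \in (0,1)$ with $x = \PR(A)$ and $T = 1/\PR(A)$ (taking the integer part where needed) shows that this quantity is at most $1/e$. Hence with probability at least $1 - 1/e$ there is some epoch $t^* \le T$ at which $A$ occurs, and by construction of the error model $Y_{i, t^*} = 1$ for every $i$.

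Next I would argue that whenever $A$ occurs at $t^*$, the pre-correction state is $\mc{E}^{\otimes n}(\rho^{(n)}_{t^*})$, a weight-$n$ error. Since any $[[n, k, d]]$ stabilizer code obeys $d \le n$, no decoder $\mc{D}$ can correct a weight-$n$ error unless $\mc{E}^{\otimes n}$ happens to lie in the stabilizer group of the code. For the generic single-qubit error $\mc{E}$ of the stabilizer framework this degenerate case does not arise, so $\mc{D}\!\left(\mc{N}^{(n)}_{t^*}(\rho^{(n)}_{t^*})\right)$ sits at trace distance essentially $1$ from the initial $k$-qubit logical state. Combining with the preceding probability bound, the expected distance of the decoded state from the initial state at some epoch in $[1, T]$ is at least $(1 - 1/e)\cdot 1 = 1 - 1/e$, which violates any retention criterion with tolerable distance $\le 1 - 1/e$. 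This yields $T \le 1/\PR(A)$.

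The main obstacle I anticipate is the second step: making the ``no decoder recovers from $\mc{E}^{\otimes n}$'' claim rigorous and universal across all codes and decoders that could reasonably be called a fault-tolerant memory. I would handle this by combining the bound $d \le n$ with an explicit exclusion of the degenerate stabilizer-fixed case, or, if a tighter statement is needed, by an averaging argument over the standard Pauli basis for $\mc{E}$ to guarantee that a constant fraction of basis errors necessarily produces a logical error at the decoder output, which then propagates into the trace-distance lower bound used above.
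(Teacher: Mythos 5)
Your probabilistic core coincides with the paper's proof: since $\mbf{X}_t$ is i.i.d.\ in time, the catastrophic events $A_t$ are independent with common probability $\PR(A)$, the chance that none occurs within $T=\frac{1}{\PR(A)}$ epochs is $(1-\PR(A))^{T}\le \frac{1}{e}$, and hence the memory fails within $T$ epochs with probability at least $1-\frac{1}{e}$; your inequality $(1-x)^{1/x}\le 1/e$ is the same estimate as the paper's $1-x\le e^{-x}$.

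The divergence is in your second step, and the specific claim you lean on is not correct as stated. It is false that a stabilizer code ``cannot correct a weight-$n$ error unless $\mc{E}^{\otimes n}$ lies in the stabilizer'': correctability of a particular Pauli error is a statement about its syndrome coset, not about its weight versus $d\le n$. A decoder is free to map the syndrome of $\mc{E}^{\otimes n}$ to (an error stabilizer-equivalent to) $\mc{E}^{\otimes n}$ itself; this choice only causes failure if some \emph{inequivalent} error with the same syndrome is more probable under the noise model, and since here $\PR(A)$ is only polynomially small in $n$, whether such a competing error dominates depends on the code (e.g.\ on whether a high-weight $\mc{E}$-type logical exists), which you do not argue. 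So the distance-based route does not establish that the occurrence of $A$ defeats \emph{every} code and decoder. The paper sidesteps this entirely by treating $\mc{E}$ as the generic decohering error of the stabilizer framework (effectively a random Pauli on each erroneous qubit), so that when $Y_{i,t}=1$ for all $i$ the post-noise state carries no information about the logical state and no decoder can recover it; its proof simply asserts ``all information in the qubits is lost.'' Your fallback suggestion of averaging $\mc{E}$ over the Pauli basis is in fact the correct way to make that assertion precise, and it—not the $d\le n$ argument—is what the lemma needs.
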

\begin{proof}
    All information in the qubits is lost when the error state $Y_{i,t}=1$  for all $i$ occurs at some epoch $t$. The probability that the error state occurs at a particular error correction epoch is $\PR(A)$ and is independent of other epochs since $\mbf{X}_t$ are chosen to be i.i.d. in time. 
    
    The probability of the memory is fully lost by epoch $l$ is $1-(1-\PR(A))^l$. This is lower bounded by $1-\exp(-\PR(A) l)$ since $1-x \le e^{-x}$ for all $x \ge 0$. Thus, for any $l \ge \frac{1}{\PR(A)}$ with probability at least $1-\frac{1}{e}$, the memory is lost.  Thus, the retention time of the memory is less than $\frac{1}{P(A)}$ for any accuracy requirement $\le 1-\frac{1}{e}$.
\end{proof}

By concentration of i.i.d. Bernoulli random variables \cite{Krishnapur2016notes},  $P(A)=P(Z \in [\frac{C_n}{2},\infty))$, where $Z$ is a Gaussian RV with zero mean and variance $\epsilon(1-\epsilon)$. By \cite[Thm 2]{Chang}, $P(A)$ is lower bounded by $c_7 \exp(-c_8C_n^2)$, for some $c_7$ and $c_8$ $>0$. Hence, by Lemma \ref{lem:length}, the retention time is upper bounded by $(1/c_7) \exp(c_8C_n^2)$.


Finally, choosing $C_n=a \sqrt{\ln n}$ for a suitable $a>0$ gives the result in the proposition.

\section{Proof of Proposition \ref{prop:hidden1DMRF}}
\label{sec:prop:hidden1DMRF}
This proposition is proved using the following lemma.
\begin{lemma}
\label{lem:finalConc}
If the error rate is upper bounded by $\epsilon$, then
$\PR(\sum_{i=1}^n Y_i > n (\epsilon+\delta)) \le c_1 \exp(-c_2 \delta^2 n) $ for some positive constants $c_1$, $c_2$ independent of $\epsilon$, $\delta$ and $n$.
\end{lemma}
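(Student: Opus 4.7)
The plan is to decompose $\sum_{i=1}^n Y_i$ around its conditional mean given $\mathbf{X}$, and bound the two resulting fluctuations by two different concentration inequalities. Writing $\sum_{i=1}^n Y_i = \bigl(\sum_{i=1}^n Y_i - \psi(\mathbf{X})\bigr) + \psi(\mathbf{X})$, where $\psi(\mathbf{X}) = \sum_{i=1}^n \EX[Y_i \mid \mathbf{X}]$, a union bound gives
\begin{equation}
\PR\!\left(\sum_i Y_i > n(\epsilon+\delta)\right) \le \PR\!\left(\sum_i Y_i - \psi(\mathbf{X}) > \tfrac{n\delta}{2}\right) + \PR\!\left(\psi(\mathbf{X}) > n\epsilon + \tfrac{n\delta}{2}\right). \nonumber
\end{equation}

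For the first term, I use the fact that the model assumes $\mathbf{q}(\mathbf{Y}\mid\mathbf{X}) = \prod_{i=1}^n q_i(Y_i\mid\mathbf{X})$, so conditionally on $\mathbf{X}$ the $Y_i$'s are independent $\{0,1\}$-valued variables. Hoeffding's inequality applied conditionally therefore gives $\PR(\sum_i Y_i - \psi(\mathbf{X}) > n\delta/2 \mid \mathbf{X}) \le \exp(-n\delta^2/2)$ pointwise in $\mathbf{X}$, and taking expectation preserves this bound.

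For the second term, the strategy is to invoke the Kontorovich--Ramanan concentration inequality \cite{KontorovichR2008} for functions of a 1D Markov chain. Since $\mathbf{X}$ is a 1D MRF with one-step mixing coefficients $\theta_i$, the matrix $\Delta_n$ of $\eta$-mixing coefficients has row sums bounded by $1 + \max_{1\le i \le n} \sum_{k=i}^n \prod_{j=i}^k \theta_j$, which is $O(1)$ by hypothesis. Because $\psi$ is $c$-Lipschitz with respect to Hamming distance on $\mathcal{X}^n$, the per-coordinate Lipschitz constants are all at most $c$, so the Kontorovich--Ramanan bound yields
\begin{equation}
\PR\!\left(\psi(\mathbf{X}) - \EX[\psi(\mathbf{X})] > \tfrac{n\delta}{2}\right) \le \exp\!\left(-\frac{n\delta^2}{8 c^2 M^2}\right), \nonumber
\end{equation}
where $M$ is the $O(1)$ quantity above. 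Since $\EX[\psi(\mathbf{X})] = \sum_i \EX[Y_i] \le n\epsilon$ by the assumption on the error rate, the event $\{\psi(\mathbf{X}) > n\epsilon + n\delta/2\}$ is contained in the event just bounded. Combining the two tail estimates and choosing $c_2 = \min(1/2, 1/(8c^2 M^2))$, $c_1 = 2$, gives the lemma.

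The main obstacle is the second step: one has to translate the assumed decay condition on $\prod_{j=i}^k \theta_j$ into the precise $\|\Delta_n\|_\infty$ quantity that appears in \cite{KontorovichR2008}, and verify that for a 1D MRF (as opposed to a time-homogeneous Markov chain, which is the setting most often stated in the literature) the $\eta$-mixing coefficients really are controlled by products of the $\theta_j$. Once this identification is made, the remaining calculation is routine Hoeffding-style bookkeeping, with the $c$-Lipschitz property of $\psi$ being the crucial hypothesis that keeps the per-coordinate effect bounded.
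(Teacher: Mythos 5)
Your proposal is correct and follows essentially the same route as the paper's own proof: decompose $\sum_i Y_i$ around the conditional mean $\psi(\mathbf{X})=\sum_i\EX[Y_i\mid\mathbf{X}]$, bound the conditional fluctuation by Hoeffding (using conditional independence of the $Y_i$ given $\mathbf{X}$, which is the paper's Lemma~\ref{lem:condIIDconc}), and bound the fluctuation of $\psi(\mathbf{X})$ via the Kontorovich--Ramanan concentration inequality for Lipschitz functions of 1D Markov chains (the paper's Lemma~\ref{lem:MarkovConcFinal}). The only cosmetic difference is that you work with one-sided tails directly while the paper passes through absolute values and a two-sided split; the substance, including the identification of $1 + \max_i \sum_{k\ge i}\prod_{j=i}^k\theta_j$ with the $\eta$-mixing quantity $M_n$ of \cite{KontorovichR2008}, is the same, and the paper likewise states this identification without spelling out the verification you flag as the ``main obstacle.''
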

Proof of this lemma is presented later. First, we prove Proposition \ref{prop:hidden1DMRF} using this lemma.

Suppose the Tanner code has a minimum distance $c n$ for some $0<c<1$. Lemma \ref{lem:finalConc} implies that if $\epsilon$ is no more than $\frac{c}{2}$ and $\delta$ is chosen to be $\frac{c}{2}$, then we obtain that the probability that the number of errors at an epoch is more than the minimum distance of the code $\le c_1 \exp(-c_2 \frac{c^2}{2} n)$. Since errors across epochs are independent, the probability that there will be at least one epoch till $T$ when the number of errors are more than the minimum distance is bounded by $r=T c_1 \exp(-c_2 \frac{c^2}{2} n)$, thanks to the union bound. So, with probability at least $r$, at each epoch till $T$, the decoded state will be exactly equal to the initial state. Thus, the distance between any decoded state till $T$ and the original state is upper bounded by $r$. This gives an exponential lower bound on $T$.

Errors in the correcting circuit can be guarded against in a way similar to \cite{BravyiCG2024high}, if these errors are i.i.d. and the rate is sufficiently small.

\begin{proof}[Proof of Lemma \ref{lem:finalConc}]
\begin{align}
& \PR\left(\sum_{i=1}^n Y_i > n \epsilon + n \delta \right) \nonumber = \PR\left(\sum_{i=1}^n Y_i - \sum_{i=1}^n \EX[Y_i] > n \delta \right) \nonumber \\
& \le \PR\left(|\sum_{i=1}^n Y_i - \sum_{i=1}^n \EX[Y_i]| > n \delta \right) \nonumber \\
& = \sum_{\mbf{x}} \mbf{p}(\mbf{x}) \PR\left(|\sum_{i=1}^n Y_i - \sum_{i=1}^n \EX[Y_i]| > n \delta | \mbf{X}=\mbf{x}\right). \nonumber
\end{align}
Now,
\begin{align}
&~~~~~\PR\left(|\sum_{i=1}^n Y_i - \sum_{i=1}^n \EX[Y_i]| > n \delta | \mbf{X}=\mbf{x}\right) \nonumber \\
& = \PR\left(|\sum_{i=1}^n Y_i - \sum_{i=1}^n \EX[Y_i|\mbf{X}=\mbf{x}] + \sum_{i=1}^n \EX[Y_i|\mbf{X}=\mbf{x}] \right. \nonumber \\
~~~~~~ & \left. - \sum_{i=1}^n \EX[Y_i]| > n \delta | \mbf{X}=\mbf{x}\right) \nonumber \\
& \le \PR\left(|\sum_{i=1}^n Y_i - \sum_{i=1}^n \EX[Y_i|\mbf{X}=\mbf{x}]|> n \frac{\delta}{2} | \mbf{X}=\mbf{x}\right) \nonumber \nonumber \\
&~~+  \PR\left(|\sum_{i=1}^n \EX[Y_i|\mbf{X}=\mbf{x}] - \sum_{i=1}^n \EX[Y_i]| > n \frac{\delta}{2} | \mbf{X}=\mbf{x}\right).
\end{align}
Last inequality is obtained by applying the following facts in their respective order: (i) if for $a, b, c>0$, $a + b > c$, then at least one of $a$ and $b$ must be greater than $\frac{c}{2}$  (ii) monotonicity of probability,  and (iii) union bound.

The following two lemmas are useful in the rest of the proof.

\begin{lemma}
\label{lem:condIIDconc}
For any given realization  $\mbf{X}=\mbf{x}$, $\PR(|\sum_{i=1}^n (Y_i - \EX[Y_i|\mbf{X}=\mbf{x}])| \ge \beta n |\mbf{X}=\mbf{x}) \le 2 \exp(-\beta^2 n)$ for any $\beta>0$.
\end{lemma}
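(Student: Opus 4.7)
The plan is to exploit the conditional independence structure of the generalized hidden MRF model: by the definition in Section~\ref{sec:model}, $\mbf{q}(\mbf{Y}|\mbf{X}) = \prod_{i=1}^n q_i(Y_i|\mbf{X})$, so once we condition on a specific realization $\mbf{X}=\mbf{x}$, the random variables $Y_1, Y_2, \ldots, Y_n$ become mutually independent. They are no longer identically distributed in general (each $Y_i$ has its own conditional mean $\EX[Y_i|\mbf{X}=\mbf{x}] = q_i(1|\mbf{x})$), but they are all $\{0,1\}$-valued and hence bounded in a unit-length interval.

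Given conditional independence and the uniform boundedness $Y_i \in [0,1]$, the statement is essentially a textbook application of Hoeffding's inequality. First I would invoke Hoeffding's inequality for a sum of independent bounded random variables $Z_i = Y_i - \EX[Y_i|\mbf{X}=\mbf{x}]$, which are centered and lie in an interval of length at most $1$. This yields, for any $t>0$,
\begin{equation}
\PR\!\left(\left|\sum_{i=1}^n Z_i\right| \ge t \,\Big|\, \mbf{X}=\mbf{x}\right) \le 2 \exp\!\left(-\frac{2 t^2}{\sum_{i=1}^n 1^2}\right) = 2 \exp\!\left(-\frac{2 t^2}{n}\right). \nonumber
\end{equation}
Substituting $t = \beta n$ gives the bound $2 \exp(-2 \beta^2 n)$, which is in fact slightly stronger than the $2 \exp(-\beta^2 n)$ stated in the lemma; so the lemma follows immediately.

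There is essentially no obstacle here: the entire content of the lemma is a Hoeffding bound enabled by the product form of $\mbf{q}(\mbf{Y}|\mbf{X})$. The only thing worth being careful about is making explicit that the conditional independence across $i$ is used, since the unconditional distribution of $\mbf{Y}$ is highly correlated and Hoeffding would not apply directly to it. This is precisely why the earlier step in the proof of Lemma~\ref{lem:finalConc} decomposes the deviation $\sum_i (Y_i - \EX[Y_i])$ into a conditional fluctuation term (handled by this lemma) and a deterministic-in-$\mbf{X}$ bias term $\sum_i (\EX[Y_i|\mbf{X}=\mbf{x}] - \EX[Y_i])$, which will be handled separately (presumably by a McDiarmid-type bound on the $c$-Lipschitz function $\psi(\mbf{X})$ combined with the $1$D MRF mixing coefficients $\theta_i$ via the Kontorovich--Ramanan machinery).
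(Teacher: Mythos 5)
Your proposal is correct and follows essentially the same route as the paper: conditional independence of the $Y_i$ given $\mbf{X}=\mbf{x}$ (from the product form $\mbf{q}(\mbf{Y}|\mbf{X})=\prod_i q_i(Y_i|\mbf{X})$) plus Hoeffding's inequality for independent, not necessarily identically distributed, $[0,1]$-valued random variables. Your explicit constant $2\exp(-2\beta^2 n)$ is indeed slightly sharper than the stated $2\exp(-\beta^2 n)$, so the lemma follows.
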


\begin{lemma}
\label{lem:MarkovConcFinal}
For any $\beta>0$, under conditions  in Proposition \ref{prop:hidden1DMRF},  $\PR(|\sum_{i=1}^n (\EX[Y_i|\mbf{X}=\mbf{x}]-\EX[Y_i])| \ge \beta n) \le 2 \exp(-c_3 \beta^2 n)$ for some $c_3>0$, which is independent of $n$.
\end{lemma}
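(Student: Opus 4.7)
The plan is to recognize Lemma~\ref{lem:MarkovConcFinal} as a direct application of the Kontorovich-Ramanan measure concentration inequality for non-i.i.d.\ processes \cite{KontorovichR2008}, whose hypotheses are essentially what the two regularity conditions of Proposition~\ref{prop:hidden1DMRF} were designed to match. First I would set $\psi(\mbf{X}) := \sum_{i=1}^n \EX[Y_i|\mbf{X}] = \sum_{i=1}^n q_i(Y_i = 1 | \mbf{X})$, which is $c$-Lipschitz on $\mc{X}^n$ with respect to Hamming distance by the second hypothesis of Proposition~\ref{prop:hidden1DMRF}. By the tower property, $\EX[\psi(\mbf{X})] = \sum_{i=1}^n \EX[Y_i]$, so the quantity that the lemma bounds is precisely $\PR(|\psi(\mbf{X}) - \EX[\psi(\mbf{X})]| \ge \beta n)$, a standard concentration-of-measure question for the 1D MRF $\mbf{X}$.

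Next I would invoke the main theorem of \cite{KontorovichR2008}: for a $c$-Lipschitz $\psi : \mc{X}^n \to \R$ and a stochastic process $\mbf{X}$ on $\mc{X}^n$,
\begin{equation}
\PR\!\left(|\psi(\mbf{X}) - \EX \psi(\mbf{X})| \ge t\right) \le 2\exp\!\left(-\frac{t^2}{2\, n\, c^2\, \|\Delta_n\|_\infty^2}\right),
\end{equation}
where $\Delta_n$ is the upper-triangular mixing matrix whose entries are the $\eta$-mixing coefficients of the process. For a 1D MRF with one-step conditionals $\{p_i(\cdot|\cdot)\}$, a standard contraction argument for Markov chains gives $(\Delta_n)_{i,k} \le \prod_{j=i}^{k-1} \theta_j$ for $i < k$ and $(\Delta_n)_{i,i} = 1$, so
\begin{equation}
\|\Delta_n\|_\infty \;=\; \max_{1 \le i \le n} \sum_{k \ge i} (\Delta_n)_{i,k} \;\le\; 1 + \max_{1 \le i \le n} \sum_{k=i}^{n} \prod_{j=i}^{k} \theta_j \;=:\; H,
\end{equation}
which is a constant independent of $n$ by the first hypothesis of Proposition~\ref{prop:hidden1DMRF}. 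Substituting $t = \beta n$ into the Kontorovich-Ramanan bound then yields
\begin{equation}
\PR\!\left(|\psi(\mbf{X}) - \EX \psi(\mbf{X})| \ge \beta n\right) \le 2\exp\!\left(-\frac{\beta^2 n}{2 c^2 H^2}\right),
\end{equation}
which is exactly the claim with $c_3 := \frac{1}{2 c^2 H^2}$.

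The main obstacle is the (largely bookkeeping) step of matching the paper's definition of the contraction coefficients $\theta_i = \tfrac{1}{2}\sup_{x,x'}\sum_a |p_i(a|x) - p_i(a|x')|$ with the mixing coefficients that appear in $\|\Delta_n\|_\infty$ in \cite{KontorovichR2008}: one must observe that $\theta_i$ is exactly the Dobrushin / total-variation contraction coefficient of the one-step conditional of the 1D MRF, so that the product $\prod_{j=i}^{k-1} \theta_j$ bounds the total-variation distance between the conditional laws of $X_k$ given $X_i = x$ and $X_i = x'$. Once this identification is in place, the first hypothesis of Proposition~\ref{prop:hidden1DMRF} feeds directly into the sub-Gaussian rate, and no further work beyond the 1D MRF structure is required.
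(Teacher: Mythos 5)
Your proposal is correct and follows essentially the same route as the paper: the paper's proof likewise identifies $\psi(\mbf{x})=\sum_i \EX[Y_i|\mbf{X}=\mbf{x}]$ as the $c$-Lipschitz function and directly invokes Theorem 1.2 of \cite{KontorovichR2008}, noting that the first hypothesis of Proposition~\ref{prop:hidden1DMRF} makes the mixing quantity ($M_n$, your $\|\Delta_n\|_\infty$) a constant independent of $n$. Your write-up simply fills in the bookkeeping (identifying $\theta_i$ as the Dobrushin contraction coefficients and making $c_3$ explicit) that the paper leaves implicit.
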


Clearly, $ \PR\left(|\sum_{i=1}^n \EX[Y_i|\mbf{X}=\mbf{x}] - \right.$ $\left.\sum_{i=1}^n \EX[Y_i]| > n \frac{\delta}{2} | \mbf{X}=\mbf{x}\right)$ is either $1$ or $0$ depending on whether the following condition is satisfied for $\mbf{x}$ or not: $|\sum_{i=1}^n \EX[Y_i|\mbf{X}=\mbf{x}] - \sum_{i=1}^n \EX[Y_i]| > n \frac{\delta}{2}$.

Thus,  $\sum_{\mbf{x}} \mbf{p}(\mbf{x})$ $ \PR\left(|\sum_{i=1}^n \EX[Y_i|\mbf{X}=\mbf{x}] - \right.$ $\left.\sum_{i=1}^n \EX[Y_i]| > n \frac{\delta}{2} | \mbf{X}=\mbf{x}\right)$ is equal to $\PR(|\sum_{i=1}^n (\EX[Y_i|X_i]-\EX[Y_i])| \ge \beta n)$. This has an exponentially vanishing upper bound from Lemma~\ref{lem:MarkovConcFinal}.

On the other hand, $\PR\left(|\sum_{i=1}^n Y_i - \sum_{i=1}^n \EX[Y_i|\mbf{X}=\mbf{x}]|> n \frac{\delta}{2} | \mbf{X}=\mbf{x}\right)$ is uniformly upper bounded for all $\mbf{x}$ in Lemma~\ref{lem:condIIDconc}. Hence, the summation with respect to $\sum_{\mbf{x}} \mbf{p}(\mbf{x})$ remains the same.

Thus, we get the final bound in Lemma \ref{lem:finalConc}.
\end{proof}

\begin{proof}[Proof of Lemma~\ref{lem:condIIDconc}]
By the definition of the proposed $1$D generalized hidden MRF model, $Y_i$ is conditionally independent of $\{Y_j: j \neq i\}$ given $\mbf{X}=\mbf{x}$. Hence, this lemma follows directly from the Hoeffding inequality \cite{BoucheronLM2013} for independent (but not necessarily identically distributed) random variables by taking $[0,1]$ as the range of these random variables.
\end{proof}

\begin{proof}[Proof of Lemma~\ref{lem:MarkovConcFinal}]

Note that $\sum_i \EX[Y_i|\mbf{X}=\mbf{x}]$ is a $c$-Lipschitz function of $\mbf{x}$. Also, $\mbf{X}$ is a $1$D Markov process. Hence, we use the concentration result for inhomogeneous Markov chains from \cite{KontorovichR2008}, in particular, Theorem 1.2. Note that by the condition in Proposition~\ref{prop:hidden1DMRF}, the quantity $M_n$ in \cite[Thm~1.2]{KontorovichR2008} is a constant independent of $n$. Thus, by plugging in all these terms in \cite[Thm~1.2]{KontorovichR2008} we get the desired
bound in Lemma~\ref{lem:MarkovConcFinal}.
\end{proof}








\end{document}